\newcommand{\beq}{\begin{equation}}
\newcommand{\eeq}{\end{equation}}
\newcommand{\bitm}{\begin{itemize}}
\newcommand{\ba}{\begin{array}}
\newcommand{\ea}{\end{array}}
\newcommand{\eitm}{\end{itemize}}
\newcommand{\beqn}{\begin{eqnarray}}
\newcommand{\eeqn}{\end{eqnarray}}
\newcommand{\beqno}{\begin{eqnarray*}}
\newcommand{\eeqno}{\end{eqnarray*}}
\newcommand{\bma}{\begin{displaymath}}
\newcommand{\ema}{\end{displaymath}}
\newcommand{\bnu}{\begin{enumerate}}
\newcommand{\enu}{\end{enumerate}}
\newcommand{\bce}{\begin{center}}
\newcommand{\ece}{\end{center}}
\newcommand{\btb}{\begin{tabular}}
\newcommand{\etb}{\end{tabular}}
\newtheorem{Definition}{\hskip 0pt Definition}
\newtheorem{lemma}{Lemma}
\newtheorem{theorem}{\textbf{\textsc{Theorem}}}
\begin{document}
\title{Evolutionary Game for Mining Pool Selection in Blockchain Networks}

\author{
\IEEEauthorblockN{Xiaojun Liu\IEEEauthorrefmark{1}\IEEEauthorrefmark{2},
Wenbo Wang\IEEEauthorrefmark{2},
Dusit Niyato\IEEEauthorrefmark{2},
Narisa Zhao\IEEEauthorrefmark{1}
and
Ping Wang\IEEEauthorrefmark{2}}

\IEEEauthorblockA{\IEEEauthorrefmark{1}Institute of Systems Engineering, Dalian University of Technology, Dalian, China, 116024}

\IEEEauthorblockA{\IEEEauthorrefmark{2}School of Computer Engineering, Nanyang Technological University, Singapore, 639798}}
\maketitle
\begin{abstract}
  In blockchain networks adopting the proof-of-work schemes, the monetary incentive is introduced by the Nakamoto consensus protocol to guide the behaviors of the full nodes (i.e., block miners) in the process of maintaining the consensus about the blockchain state. The block miners have to devote their computation power measured in hash rate in a crypto-puzzle solving competition to win the reward of publishing (a.k.a., mining) new blocks. Due to the exponentially increasing difficulty of the crypto-puzzle, individual block miners tends to join mining pools, i.e., the coalitions of miners, in order to reduce the income variance and earn stable profits. In this paper, we study the dynamics of mining pool selection in a blockchain network, where mining pools may choose arbitrary block mining strategies. We identify the hash rate and the block propagation delay as two major factors determining the outcomes of mining competition, and then model the strategy evolution of the individual miners as an evolutionary game. We provide the theoretical analysis of the evolutionary stability for the pool selection dynamics in a case study of two mining pools. The numerical simulations provide the evidence to support our theoretical discoveries as well as demonstrating the stability in the evolution of miners' strategies in a general case.
\end{abstract}
\begin{IEEEkeywords}
Blockchain, proof-of-work, mining pool, evolutionary game
\end{IEEEkeywords}

\section{Introduction}\label{Sec:introduction}
Since its introduction in the grassroot online project ``Bitcoin''~\cite{nakamoto2008bitcoin}, the technology of blockchain has attracted significant attentions across the academia, the industry and the public. A blockchain network is built upon a virtual overlay Peer-to-Peer (P2P) network as a decentralized temper-proof system for transactional record logging~\cite{7423672}. For permissionless blockchains, the Nakamoto consensus protocol~\cite{nakamoto2008bitcoin} is widely adopted to financially incentivize the full nodes (block miners) to abide by the “longest-chain rule” in maintaining the blockchain state. In a blockchain network, the blockchain users issue the digitally signed transactions between their cryptographic addresses (a.k.a., wallets). Following the Nakamoto protocol, the block miners pack an arbitrary number of such verified transactions into a data structure, known as a candidate block, and broadcast it to the rest of the network. A blockchain is thus organized as a hash-linked list of such blocks and stored distributively as the local replica on each block miner. Following the ``longest-chain rule'', at a given time instance, the longest one among the proposed blockchain views will be ultimately recognized by the network as the current state of the blockchain~\cite{7423672}. Both the engineering practices and theoretical studies~\cite{Garay2015} have shown that the Nakamoto protocol is able to guarantee the persistence and liveness of a blockchain in a Byzantine environment. In other words, when a majority of the mining nodes honestly follow the Nakamoto protocol, the transactional data on the blockchain are guaranteed to be immutable once they are recorded.

In the Nakamoto protocol, the financial incentive mechanism consists of two parts: (a) a computation-intensive crypto-puzzle solving process to make Sybil attacks financially unaffordable, and (b) a reward generation process to award the miners when their published blocks are recognized by the entire network. The crypto-puzzle solving process is implemented through a Proof-of-Work (PoW) process~\cite{Garay2015}, where the miners exhaustively query a trusted random oracle (e.g., a SHA-256 hash function) to find a string satisfying a preimage condition based on their individual block proposals. In the block awarding process, the first miner whose candidate block gets accepted by the majority of the network will receive a monetary reward for its contribution in approving transactions and extending the blockchain by one block. Except for the transaction fees named by the transaction issuers, the winner in a round of the block mining competitions may also receive an \emph{ex-nihilo}, fixed-amount of award according to the token-generation mechanism of the blockchain~\cite{nakamoto2008bitcoin,7423672}.

The probability of winning a puzzle-solving competition depends on the ratio between a miner's hash rate (i.e., number of queries to the random oracle per unit time) and the total hash rate of the entire network~\cite{Garay2015}. In addition, the block propagation time in the P2P network determines the final result of block confirmation within one consensus round, since only the first block propagated to the majority of the network can be confirmed as the new head of the blockchain~\cite{Rizun2015, Houy2016}. In practical scenarios, the chance for individual miners to win in one round of the mining competitions has been negligible due to the exponential growth of hash rate in the network. As a result, the real-world blockchain networks are dominated by the proxy nodes that represent the coalitions of miners known as mining pools~\cite{7423672}. A mining pool works as a task scheduler by dividing a puzzle-solution task into smaller sub-tasks and assigning them to the miners according to their devoted hash rate. By aggregating the hash rate of many miners, the probability for a mining pool to win a block reward becomes significantly large. Then, an individual miner can secure its small but stable share of reward according to the ratio between its hash rate and the hash rate of the pool.

In this paper, we study the problem of mining pool selection in a PoW-based blockchain network, where each mining pool may adopt a different arbitrary block mining strategy~\cite{fisch2017socially}. By assuming that the individual miners are rational and profit-driven, we propose a model based on the evolutionary game to mathematically describe the dynamic mining-pool selection process in a large population of individual miners. Considering the computation power and propagation delay as the two major factors to determine the results of mining competitions, we focus on how these two factors as well as the cost of the computation resource (mainly in electricity) impact the strategies of the individual miners for pool selection. Based on a case study of two mining pools, we provide the theoretical analysis of evolutionary stability for the pool-selection dynamics. Our numerical simulation results provide the evidences that support our theoretical discoveries and further present the experimental insight into the impact of the arbitrary strategies on the reward outcomes of different mining pools.

\section{Problem formulation}\label{Sec:first}

\subsection{Financially Incentivized Block Mining with Proof-of-Work}
\label{lab_subsec_mining}
We consider a blockchain network adopting the Nakamoto consensus based on Proof-of-Work (PoW)~\cite{nakamoto2008bitcoin}. Assume that the network is composed of a large population of $N$ individual miners. For each miner, the chance of mining a new block is in proportion to the ratio between its individual hash rate for solving the crypto-puzzles in PoW and the total hash rate in the network. According to the Nakamoto consensus protocol, the miner of each confirmed block receives a fixed amount of blockchain tokens from the new block's coinbase and a flexible amount of transaction fees as the reward for maintaining the blockchain's consensus and approving the transactions~\cite{7423672}. We consider that the individual miners organize themselves into a set of $M$ mining pools, namely, $\mathcal{M}\!=\!\{1, 2, \ldots, M\}$. We further consider that each mining pool may set different requirement on the hash rate contributed by an individual miner trying to join the pool. Let $\omega_i$ denote the individual hash rate required by pool $i$ ($i\in\mathcal{M}$), and $x_i$ denote the miners' population fraction in pool $i$. Then, the probability for pool $i$ to mine a block in one consensus round can be expressed as:
\begin{equation}
  \label{eq_prob_win}
{\Pr}^{\textrm{mine}}_i(\mathbf{x},\pmb\omega)=\frac{\omega_i x_i}{\sum_{j=1}^{M}\omega_j x_j},
\end{equation}
where $\pmb\omega=[\omega_1,\ldots,\omega_M]^{\top}$, $\mathbf{x}=[x_1,\ldots,x_M]^{\top}$, $\sum_{i\in \mathcal{M}} x_i=1$ and $\forall i$, $x_i\ge0$.

After successfully mining a block, pool $i$ broadcasts the mined block to its neighbors in the hope that it will be propagated to the entire network and confirmed as the new head block of the blockchain. However, in the situation where more than one mining pool discover a new block at the same time, only the block that is first disseminated to the network will be confirmed by the network. All of the rest candidate blocks will be discarded (orphaned). According to the empirical studies in~\cite{Rizun2015, Houy2016}, the block propagation time is mainly determined by two factors, namely, the transmission delay over each link and the transaction verification time at each relaying node. For a block of size $s$, the transmission delay can be modeled as $\tau_p(s)\!=\!\frac{s}{\gamma c}$~\cite{Rizun2015}, where $\gamma$ is a network scale-related parameter, and $c$ is the average effective channel capacity of each link. On the other hand, since verifying a transaction requires roughly the same amount of computation, the block verification time can be modeled as a linear function  $\tau_v(s)\!=\!bs$, where $b$ is a parameter determined by both the network scale and the average verification speed of each node. Then, the average propagation time for a block of size $s$ can be expressed as:
\begin{equation}
  \label{eq_pro_time}
\tau(s)=\tau_p(s)+\tau_v(s)=\frac{s}{\gamma c}+bs.
\end{equation}

Based on (\ref{eq_pro_time}), the incidence of abandoned (i.e., orphaning) a valid block due to the propagation delay can be modeled as a Poisson process with mean $1/T$, where $T$ is maintained by the network as a fixed average mining time (e.g., 600s in Bitcoin)~\cite{Rizun2015}. Then, the probability of orphaning a valid candidate block of size $s$ in one consensus round is:
\begin{equation}
  \label{eq_prob_orphane}
  {\Pr}^{\textrm{orphan}}(s)=1-e^{-\tau(s)/T}=1-e^{-(\frac{s}{\gamma c}+bs) /T}.
\end{equation}
From (\ref{eq_prob_win}) and (\ref{eq_prob_orphane}), the probability for pool $i$ to ultimately win a block mining race with a block of size $s_i$ can be derived as:
\begin{equation}
  \label{eq_prob_winning}
{\Pr}^{\textrm{win}}_{i}(\mathbf{x}, \pmb{\omega}, s_i)=\frac{\omega_i x_i}{\sum_{j=1}^{M}\omega_j x_j}e^{-(\frac{s_i}{\gamma c}+bs_i)/T}.
\end{equation}

We assume that the transactions in the blockchain network are issued with an invariant rate of transaction fees. When the transactions are of fixed size, pool $i$'s mining reward from transaction fee collection can also be modeled as a linear function of the block size $s_i$. Let $\rho$ denote the price of transaction in a unit block size~\cite{Houy2016}. Then, the reward of pool $i$ from transaction fees can be written as $\rho s_i$. Let $R$ denote the fixed reward from the new block's coinbase. Then, the expected reward for pool $i$ can be expressed as follows:
\begin{equation}
  \label{eq_expected_reward}
  E\{r_i(\mathbf{x}, \pmb{\omega}, s_i)\}=(R+\rho s_i)\frac{\omega_i x_i}{\sum_{j=1}^{M}\omega_j x_j}e^{-(\frac{s_i}{\gamma c}+bs_i)/T}.
\end{equation}

Since the process of crypto-puzzle solving in PoW is computationally intensive, the rational miners also have to consider the cost of power consumption due to hash computation in the block mining process. Noting that the new blocks are discovered with a roughly fixed time interval, we denote the energy price for generating a unit hash query rate during that time interval by $p$. Then, we can obtain the expected payoff for an individual miner in pool $i$ as follows:
\begin{equation}
  \label{eq_payoff_individual}
  y_{i}(\mathbf{x}, \pmb{\omega}, s_i)=\frac{R+\rho s_i}{N x_i}\frac{\omega_i x_i}{\sum_{j=1}^{M}\omega_j x_j}e^{-(\frac{s_i}{\gamma c}+bs_i)/T}-p\omega_i.
\end{equation}\vspace*{-2mm}

\subsection{Mining Pool Selection as an Evolutionary Game}
Consider that the individual miners are rational and aim to maximize their net payoff given in (\ref{eq_payoff_individual}). Then, it is nature to model the process of mining pool selection in the population of individual miners as an evolutionary game. Mathematically, we can define the evolutionary game for mining pool selection as a 4-tuple: $\mathcal{G}=\langle\mathcal{N}, \mathcal{M}, \mathbf{x}, \{y_i(\mathbf{x}; \pmb{\omega}, s_i)\}_{i\in\mathcal{M}} \rangle$, where
\begin{itemize}
  \item $\mathcal{N}$ is the population of individual miners, $\vert\mathcal{N}\vert\!=\!N$.
  \item $\mathcal{M}\!=\!\{1, 2, \ldots, M\}$ is the set of mining pools, and $(w_i, s_i)$ is the mining strategy preference of each pool $i\in\mathcal{M}$.
  \item $\mathbf{x}=[x_1,\ldots,x_M]^{\top}\in\mathcal{X}$ is the vector of the population states, where $x_i$ represents the fraction of population that choose mining pool $i$. $\mathcal{X}=\{\mathbf{x}\in\mathbb{R}^M_{+}: \sum_{i\in \mathcal{M}} x_i=1\}$.
  \item $\{y_i(\mathbf{x}; \pmb{\omega}, s_i)\}_{i\in\mathcal{M}}$ is the set of individual miner's payoff in each mining pool. $y_i(\mathbf{x}; \pmb{\omega}, s_i)$ is given by (\ref{eq_payoff_individual}).
\end{itemize}

We note that $\omega_i$ and $s_i$ form the predetermined mining strategy of pool $i$. Given a population state $\mathbf{x}\in\mathcal{X}$, we can derive the average payoff of the individual miner in $\mathcal{N}$ based on (\ref{eq_payoff_individual}) as follows:
\begin{equation}
  \label{eq_average_payoff}
  \overline{y}(\mathbf{x})=\sum_{i=1}^{M}y_i(\mathbf{x}; \pmb{\omega}, s_i) x_i.
\end{equation}
Then, by the pairwise proportional imitation protocol~\cite{Josef2003}, the replicator dynamics for the evolution of the population states can be expressed by the following system of Ordinary Differential Equations (ODEs) $\forall i\in\mathcal{M}$~\cite{Josef2003}:
\begin{equation}
  \label{eq_rd}
\dot{x}_{i}(t)=f_i(\mathbf{x}(t); \pmb{\omega}, s_i)=x_{i}(t)(y_{i}(\mathbf{x}(t); \pmb{\omega}, s_i)-\overline{y}(\mathbf{x}(t))),
\end{equation}
where $\dot{x}_{i}(t)$ represents the growth rate of the size of pool $i$ with respect to time $t$.

We are interested in the Nash Equilibria (NE) of game $\mathcal{G}$ described by (\ref{eq_rd}). Let $Y(\mathbf{x})$ denote the vector of individual payoffs for all the mining pools, $Y(\mathbf{x})\!=\![y_1(\mathbf{x}),\ldots,y_M(\mathbf{x})]^{\top}$ and let $\mathcal{E}(Y)$ denote the set of NE in game $\mathcal{G}$. Then, $\mathcal{E}(Y)$ can be defined as follows~\cite{HOFBAUER20091665}:
\begin{Definition}[NE]
\label{def_NE}
 A population state $\mathbf{x^*}\!\in\!\mathcal{X}$ is an NE of the evolutionary game $\mathcal{G}$, i.e., $\mathbf{x}^*\!\in\!\mathcal{E}(Y)$, if for all feasible population state $\mathbf{x}\!\in\!\mathcal{X}$ the following inequality holds
\begin{equation}
  \label{eq_variational_inequality}
  (\mathbf{x}-\mathbf{x}^*)^{\top}Y(\mathbf{x}^*)\le0.
\end{equation}
\end{Definition}

It is straightforward that an NE is a fixed point of the replicator dynamics given by (\ref{eq_rd}), namely, $\forall i\!\in\!\mathcal{M}, f_{i}(\mathbf{x}(t); \pmb{\omega}, s_i)= 0$~\cite{Josef2003}. Then, we need to further investigate the stability of an NE state $\mathbf{x}^*\!\in\!\mathcal{E}(Y)$ for pool selection. Suppose that there exists another population state $\mathbf{x}'$ trying to invade state $\mathbf{x}^*$ by attracting a small share $\epsilon\!\in\!(0,1)$ in the population of miners to switch to $\mathbf{x}'$. Then, $\mathbf{x}'$ is an Evolutionary Stable Strategy (ESS) if the following condition holds for all $\epsilon\in(0, \overline{\epsilon})$:
\begin{equation}
  \label{eq_mutant}
  \sum_{i\in\mathcal{M}}x^*_iy_i((1-\epsilon)\mathbf{x}^*+\epsilon\mathbf{x}')\ge\sum_{i\in\mathcal{M}}x'_iy_i((1-\epsilon)\mathbf{x}^*+\epsilon\mathbf{x}').
\end{equation}
Based on (\ref{eq_mutant}), we can formally define the ESS as follows.
\begin{Definition}[ESS~\cite{HOFBAUER20091665}]
  \label{def_ESS}
  A population state $\mathbf{x^*}$ is an ESS of game $\mathcal{G}$, if there exists a neighborhood $\mathcal{B}\in\mathcal{X}$, such that $\forall \mathbf{x}\in\mathcal{B}-\mathbf{x}^*$, the condition $(\mathbf{x}-\mathbf{x}^*)^{\top}Y(\mathbf{x}^*)=0$ implies that
  \begin{equation}
    \label{eq_ess}
    (\mathbf{x}^*-\mathbf{x})^{\top}Y(\mathbf{x})\ge0.
  \end{equation}
\end{Definition}

In Algorithm~\ref{alg_rd}, we describe the strategy evolution of the $N$ individual miners following the revision protocol of pairwise proportional imitation~\cite{Weibull1997}. When receiving a signal for strategy revision of choosing a new pool, an individual miner switches from it current pool to the new pool probabilistically according to (\ref{eq_imitation_prob}). As the population size increases, the pairwise proportional imitation will asymptotically lead to the replicator dynamics described by the ODEs in (\ref{eq_rd}).
\begin{algorithm}[t!]
  \begin{algorithmic}[1]
  \STATE \textbf{Initialization}: $\forall i\in\mathcal{N}$, miner $i$ randomly selects a mining pool to start with.
  \STATE $t \gets 1$
  \WHILE{$\mathbf{x}$ has not converged \AND {$t<\textrm{MAX\_COUNT}$}}
    \FOR {$i\in\mathcal{N}$}
      \STATE $j\gets\textrm{Rand}(1, M)$ \COMMENT{Randomly selects a mining pool $j\in\mathcal{M}$}
      \STATE Determine whether to switch to pool $j$ according to the following probability of pool switching $\rho_{i,j}$:
      \begin{equation}
        \label{eq_imitation_prob}
        \rho_{i,j}=x_j\max(y_j(\mathbf{x}; \pmb{\omega}, s_j) - y_i(\mathbf{x}; \pmb{\omega}, s_i), 0).
      \end{equation}
    \ENDFOR
    \STATE $t \gets t+1$
  \ENDWHILE
  \end{algorithmic}
 \caption{Mining Pool Selection Following the Pairwise Proportional Imitation Protocol.}
 \label{alg_rd}
\end{algorithm}

\subsection{A Case Study of Two Mining Pools}
In this section, we consider a special case of a blockchain network with two mining pools, i.e., $M=2$. Let the population fraction of each pool be $x_1=x$, and $x_2=1-x$. From Definition~\ref{def_NE} and by solving $\dot{x}_i(t)=0, i\in [1,2]$, we can obtain Theorem~\ref{theorem_equilibrium} as follows.
\begin{theorem}
  \label{theorem_equilibrium}
  Based on the replicator dynamics in (\ref{eq_rd}), a blockchain network of two mining pools has three rest points in the form of $(x^*,1-x^*)$ with
\begin{equation}
  \label{equilibrium}
x^*\in\left\{0, 1, \frac{a-b}{Np(\omega_1-\omega_2)^2}-\frac{\omega_2}{\omega_1-\omega_2}\right\},
\end{equation}
where $a=(R+\rho s_1)\omega_1 e^{-(\frac{s_1}{\gamma c}+bs_1)/T}$, $b=(R+\rho s_2)\omega_2 e^{-(\frac{s_2}{\gamma c}+bs_2)/T}$ and $0<\frac{a-b}{Np(\omega_1-\omega_2)^2}-\frac{\omega_2}{\omega_1-\omega_2}<1$.
\end{theorem}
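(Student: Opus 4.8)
The plan is to collapse the two‑dimensional replicator system on the simplex to a single scalar ODE on $[0,1]$ and then read off its zeros directly. First I would substitute $M=2$, $x_1=x$, $x_2=1-x$ into the individual payoff~(\ref{eq_payoff_individual}). The crucial observation is that the factor $x_i$ appearing in the reward numerator cancels the $x_i$ in the denominator, so that, writing $D(x)=\omega_1 x+\omega_2(1-x)=\omega_2+(\omega_1-\omega_2)x$, the two payoffs take the clean form
\begin{equation}
  \label{eq_proof_payoffs}
  y_1=\frac{a}{N D(x)}-p\omega_1,\qquad y_2=\frac{b}{N D(x)}-p\omega_2,
\end{equation}
where $a$ and $b$ are exactly the constants in the theorem statement. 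Note that $D(x)$ is a convex combination of $\omega_1>0$ and $\omega_2>0$, hence $D(x)>0$ for all $x\in[0,1]$, so no division issues arise on the feasible set.

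Next I would use the special structure of the replicator equation. Since $\overline y(\mathbf{x})=x\,y_1+(1-x)\,y_2$, a one‑line computation gives $y_1-\overline y=(1-x)(y_1-y_2)$, so (\ref{eq_rd}) for $i=1$ reduces to
\begin{equation}
  \label{eq_proof_scalar}
  \dot x = x(1-x)(y_1-y_2),
\end{equation}
and the equation for $x_2=1-x$ is automatically $-\dot x$, consistent with $\dot x_1+\dot x_2=0$ on $\mathcal{X}$. Thus the rest points are precisely the roots of $x(1-x)(y_1-y_2)=0$: the two monomorphic states $x^*=0$ and $x^*=1$, together with any interior point at which $y_1-y_2=0$.

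For the interior root I would set, from (\ref{eq_proof_payoffs}), $\dfrac{a-b}{N D(x)}=p(\omega_1-\omega_2)$; since $D(x)>0$ on $[0,1]$ this is equivalent to the \emph{linear} equation $D(x)=\dfrac{a-b}{Np(\omega_1-\omega_2)}$, which (provided $\omega_1\neq\omega_2$, the case in which the third rest point is well defined) has the unique solution
\begin{equation}
  \label{eq_proof_interior}
  x^*=\frac{1}{\omega_1-\omega_2}\!\left(\frac{a-b}{Np(\omega_1-\omega_2)}-\omega_2\right)=\frac{a-b}{Np(\omega_1-\omega_2)^2}-\frac{\omega_2}{\omega_1-\omega_2}.
\end{equation}
Finally I would note that this $x^*$ is a legitimate population state, and hence a genuine third rest point $(x^*,1-x^*)\in\mathcal{X}$, exactly when it lies strictly between $0$ and $1$, which is the side condition $0<\frac{a-b}{Np(\omega_1-\omega_2)^2}-\frac{\omega_2}{\omega_1-\omega_2}<1$ assumed in the statement; otherwise only the two boundary rest points remain. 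This proves the claim.

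I do not expect a serious obstacle: the argument rests entirely on the cancellation that turns the reward term into a function of $D(x)$ alone and on the identity $y_1-\overline y=(1-x)(y_1-y_2)$, after which finding the fixed points is solving a linear equation. The only points demanding care are the bookkeeping of the constants $a,b$ and explicitly excluding the degenerate case $\omega_1=\omega_2$ (for which $D$ is constant and the interior condition is either vacuous or impossible).
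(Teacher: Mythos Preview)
Your proposal is correct and follows essentially the same approach as the paper: reduce the replicator dynamics to $\dot x=x(1-x)\bigl(\frac{a-b}{N D(x)}-p(\omega_1-\omega_2)\bigr)$ and read off the three roots. You are in fact a bit more careful than the paper, since you explicitly justify $D(x)>0$ on $[0,1]$ and flag the degenerate case $\omega_1=\omega_2$.
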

\begin{proof}
From $f_{i}(\mathbf{x}(t))= 0, \forall i\in \{1,2\}$, we have
\begin{align}
&f_{i}(\mathbf{x}(t))=x_{i}(t)(y_{i}(\mathbf{x}(t),a_i)-\overline{y}(\mathbf{x}(t)))\nonumber\\
&=x_{i}(t)(1\!-\!x_{i}(t))\left(\frac{a-b}{N(\omega_1x_{i}(t)\!+\!\omega_2(1\!-\!x_{i}(t)))}\!-\!p(\omega_1\!-\!\omega_2)\right).
\end{align}
Then, by solving $f_{i}(\mathbf{x}(t))=0$, we can obtain the three rest points for the case of two mining pools as $(x^*,1-x^*)$, where $x^*\in\left\{0, 1, \frac{a-b}{Np(\omega_1-\omega_2)^2} -\frac{\omega_2}{\omega_1-\omega_2}\right\}$. Since from any initial state $\mathbf{x}(0)\in\mathcal{X}$, the rest point of (\ref{eq_rd}) should stay in the interior of $\mathcal{X}$, we have the following condition:
\begin{equation}
  \label{eq_well_defined_rest_point}
  0<\frac{a-b}{Np(\omega_1-\omega_2)^2}-\frac{\omega_2}{\omega_1-\omega_2}<1.
\end{equation}
Then, the proof of Theorem~\ref{theorem_equilibrium} is completed.
\end{proof}

Now, we are ready to investigate the evolutionary stability of the three fixed points. In the case of $x^*=0$ and $x^*=1$, the population state is $(0,1)$ and $(1,0)$, respectively. We know that the two fixed points are of the similar form, since the individual payoff functions are similar for each mining pool. Therefore, we only need to check the case with $x_1=x^*=0$.
\begin{lemma}
  \label{lemma_ESS}
  For game $\mathcal{G}$ with two mining pools, 1) The rest point with $x^*=0$ is an ESS, if the conditions given by (\ref{condition11}) and (\ref{condition12}) hold:
\begin{equation}
\label{condition11}
\frac{a-b}{N\omega_2}-p(\omega_1-\omega_2)<0,
\end{equation}
\begin{equation}
  \label{condition12}
\left(\frac{a-b}{N\omega_2}-p(\omega_1-\omega_2)\right)\left(p\omega_2-\frac{b}{N\omega_2}\right)>0.
\end{equation}
2) If the conditions given by (\ref{condition1}) and (\ref{condition2}) hold, the rest point with $x^*=\frac{a-b}{Np(\omega_1-\omega_2)^2}-\frac{\omega_2}{\omega_1-\omega_2}$ is an ESS.
\begin{equation}
\label{condition1}
\frac{c(a(\omega_1+\omega_2)+\omega_1(-2b+Np\omega_1(\omega_2-\omega_1)))}{(a-b)}<0,
\end{equation}
\begin{equation}
  \label{condition2}
\frac{pc(-b\omega_1+a\omega_2)(a-b+Np\omega_1(\omega_2-\omega_1))}{(\omega_1-\omega_2)}>0,
\end{equation}
where $c=a-b+Np\omega_2(\omega_2-\omega_1)$.
\end{lemma}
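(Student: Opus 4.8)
The plan is to prove Lemma~\ref{lemma_ESS} by linearizing the replicator dynamics (\ref{eq_rd}) about each of the two rest points and using the fact that, for a population using only the two pure strategies ``join pool $1$'' and ``join pool $2$'', local asymptotic stability of a rest point of (\ref{eq_rd}) implies (indeed is equivalent to) its being an ESS in the sense of Definition~\ref{def_ESS}~\cite{Weibull1997,HOFBAUER20091665}; so it suffices to exhibit conditions making the $2\times 2$ Jacobian $\mathbf{J}(\mathbf{x})=\bigl[\partial f_i(\mathbf{x};\pmb{\omega},s_i)/\partial x_j\bigr]_{i,j\in\{1,2\}}$ of the field in (\ref{eq_rd}), evaluated at the rest point, a stable (Hurwitz) matrix. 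Writing $S(\mathbf{x})=\omega_1 x_1+\omega_2 x_2$, $y_i=\frac{c_i}{NS}-p\omega_i$ with $c_1=a$, $c_2=b$ (hence $\partial y_i/\partial x_j=-c_i\omega_j/(NS^2)$), and $\overline y=x_1 y_1+x_2 y_2$, one writes out every entry of $\mathbf{J}$. A convenient reduction: since $f_1+f_2=\overline y\,(1-x_1-x_2)$, at any rest point $\mathbf{x}^*$ one has $(1,1)\,\mathbf{J}(\mathbf{x}^*)=-\overline y(\mathbf{x}^*)\,(1,1)$, so $-\overline y(\mathbf{x}^*)$ is the eigenvalue in the direction transverse to $\mathcal{X}$, while the remaining eigenvalue is the derivative at $\mathbf{x}^*$ of the scalar field $x\mapsto x(1-x)\bigl(\tfrac{a-b}{NS}-p(\omega_1-\omega_2)\bigr)$ from the proof of Theorem~\ref{theorem_equilibrium} (the direction tangent to $\mathcal{X}$). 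Thus Hurwitz-ness of $\mathbf{J}$ amounts to $\det\mathbf{J}(\mathbf{x}^*)>0$ together with one scalar sign condition.

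For part 1 I would evaluate at $(x_1,x_2)=(0,1)$. Because $f_1$ carries the factor $x_1$, the entry $\partial f_1/\partial x_2$ vanishes there, so $\mathbf{J}$ is lower triangular with diagonal entries (its eigenvalues) $\partial f_1/\partial x_1=\tfrac{a-b}{N\omega_2}-p(\omega_1-\omega_2)$ and $\partial f_2/\partial x_2=p\omega_2-\tfrac{b}{N\omega_2}$. Demanding the first to be negative is precisely (\ref{condition11}); demanding the product of the two diagonal entries, i.e.\ $\det\mathbf{J}$, to be positive is precisely (\ref{condition12}), which together with (\ref{condition11}) forces both eigenvalues negative. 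Hence $(0,1)$ is an asymptotically stable node of (\ref{eq_rd}), therefore an ESS; by the symmetry of the payoff functions noted just before the lemma, the same argument handles $(1,0)$.

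For part 2 I would evaluate at $\mathbf{x}^*=(x^*,1-x^*)$ with $x^*$ as in (\ref{equilibrium}). Since $\mathbf{x}^*$ is interior, $y_1(\mathbf{x}^*)=y_2(\mathbf{x}^*)=\overline y(\mathbf{x}^*)$, which kills the $(y_i-\overline y)$ terms in the diagonal entries of $\mathbf{J}$ and lets me substitute $S(\mathbf{x}^*)=\tfrac{a-b}{Np(\omega_1-\omega_2)}$ and $x^*=\tfrac{c}{Np(\omega_1-\omega_2)^2}$, with $c=a-b+Np\omega_2(\omega_2-\omega_1)$ as in the statement. After substitution, $\partial f_1/\partial x_1$ at $\mathbf{x}^*$ collapses, up to the positive factor $1/\bigl(N(\omega_1-\omega_2)^2\bigr)$, to the left-hand side of (\ref{condition1}), and $\det\mathbf{J}$ at $\mathbf{x}^*$ collapses, up to a positive factor, to the left-hand side of (\ref{condition2}). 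The tangent eigenvalue is automatically negative in the interior case, because its sign is that of $-x^*(1-x^*)(a-b)(\omega_1-\omega_2)$ and $(a-b)(\omega_1-\omega_2)>0$ is forced by $S(\mathbf{x}^*)>0$; hence under (\ref{condition1})--(\ref{condition2}) the matrix $\mathbf{J}$ is Hurwitz, $\mathbf{x}^*$ is asymptotically stable for (\ref{eq_rd}), and therefore an ESS.

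The step I expect to be the main obstacle is the algebra in part 2: unlike the triangular case at the vertex, the Jacobian at the interior point is full, and writing $\partial f_1/\partial x_1$ and $\det\mathbf{J}$ (or, on the Routh--Hurwitz route, $\operatorname{tr}\mathbf{J}$ and $\det\mathbf{J}$) in the original parameters after plugging in the fractional value of $x^*$ from (\ref{equilibrium}) produces bulky rational expressions; showing that they collapse to the compact factored forms (\ref{condition1})--(\ref{condition2}) with the common factor $c$ is the only delicate point, the rest being bookkeeping. A secondary point to state carefully is the passage from planar asymptotic stability to ESS per Definition~\ref{def_ESS}, which relies precisely on the tangency of the replicator field to $\mathcal{X}$ and on the two-strategy equivalence of ESS and asymptotic stability of (\ref{eq_rd}).
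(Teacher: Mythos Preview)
Your approach is essentially the same as the paper's: both linearize the replicator dynamics, compute the $2\times 2$ Jacobian at each rest point, and deduce the ESS property from local asymptotic stability by checking the sign of the leading principal minor and of $\det\mathbf{J}$. Your write-up is in fact a bit cleaner than the paper's in two respects. First, at the vertex $(0,1)$ you exploit the lower-triangular structure (the factor $x_1$ in $f_1$ kills $\partial f_1/\partial x_2$) to read off the eigenvalues directly; the paper instead phrases the test as ``negative definiteness'' and applies a Sylvester-type criterion to a non-symmetric matrix, which is not quite the right language even though the resulting sign conditions coincide here. Second, your observation that $(1,1)$ is always an eigenvector of $\mathbf{J}(\mathbf{x}^*)$ with eigenvalue $-\overline y(\mathbf{x}^*)$, and that the tangent eigenvalue at the interior rest point is automatically negative once $S(\mathbf{x}^*)>0$, is a genuine refinement not present in the paper; it explains \emph{why} the leading-minor/determinant test captures Hurwitz stability in this particular problem and anticipates the asymptotic analysis of Theorem~\ref{theorem_stable}. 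The algebraic reduction you flag as the main obstacle in part~2 is exactly what the paper sweeps under ``after some tedious mathematical manipulations.''
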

\begin{proof}
According to Definition 2.6 in~\cite{Weibull1997}, the asymptotically stable state of the ODE system given in (\ref{eq_rd}) is guaranteed to be an ESS. When the replicator
dynamics is continuous-time, it is asymptotically stable if the Jacobian matrix of the dynamic system at the equilibrium is negative definite, or equivalently, if all the eigenvalues of the Jacobian matrix have negative real parts~\cite{Cressman2003}. For the replicate dynamic system given in (\ref{eq_rd}), the Jacobian matrix of the replicator dynamics in a two-mining-pool network is
\begin{equation}
  \label{eq_jacobian}
J=\left[
  \begin{array}{cc}
    \displaystyle\frac{\partial f_{1}(\mathbf{x})}{\partial x_{1}} & \displaystyle\frac{\partial f_{1}(\mathbf{x})}{\partial x_{2}} \\
    \displaystyle\frac{\partial f_{2}(\mathbf{x})}{\partial x_{1}} & \displaystyle\frac{\partial f_{2}(\mathbf{x})}{\partial x_{2}}
  \end{array}
\right]\Bigg|_{(x_1=x^*, x_2=1-x^*)}.
\end{equation}
Further, the elements in (\ref{eq_jacobian}) can be derived as follows:
\begin{align}
  \label{eq_11}
&\frac{\partial f_{1}(\mathbf{x})}{\partial x_{1}}
\!=\!(1\!-\!2x_1)\Bigg(\frac{a}{N(\omega_1x_1\!+\!\omega_2x_2)}\!-\!p\omega_1\Bigg)\nonumber\\
&-\frac{a\omega_1(x_1\!-\!x_1^2)}{N(\omega_1x_1\!+\!\omega_2x_2)^2}
\!-\!\frac{b\omega_2x_2^2}{N(\omega_1x_1\!+\!\omega_2x_2)^2}+p\omega_2x_2,
\end{align}

\begin{align}
  \label{eq_12}
&\frac{\partial f_{1}(\mathbf{x})}{\partial x_{2}}
\!=\!x_1\Bigg(p\omega_2\!-\!\frac{a\omega_2(1\!-\!x_1)}{N(\omega_1x_1\!+\!\omega_2x_2)^2}\nonumber\\
&+\frac{b\omega_2x_2}{N(\omega_1x_1\!+\!\omega_2x_2)^2}
\!-\!\frac{b}{N(\omega_1x_1\!+\!\omega_2x_2)}\Bigg),
\end{align}

\begin{align}
  \label{eq_21}
&\frac{\partial f_{2}(\mathbf{x})}{\partial x_{1}}
\!=\!x_2\Bigg(p\omega_1\!+\!\frac{a\omega_1x1}{N(\omega_1x_1\!+\!\omega_2x_2)^2}\nonumber\\
&-\frac{b\omega_1(1\!-\!x_2)}{N(\omega_1x_1\!+\!\omega_2x_2)^2}
-\frac{a}{N(\omega_1x_1\!+\!\omega_2x_2)}\Bigg),
\end{align}

\begin{align}
\label{eq_22}
&\frac{\partial f_{2}(\mathbf{x})}{\partial x_{2}}
\!=\!(1-\!2x_2)\Bigg(\frac{b}{N(\omega_1x_1\!+\!\omega_2x_2)}\!-\!p\omega_2\Bigg)\nonumber\\
&-\frac{b\omega_2(x_2\!-\!x_2^2)}{N(\omega_1x_1\!+\!\omega_2x_2)^2}
\!-\!\frac{a\omega_1x_1^2}{N(\omega_1x_1\!+\!\omega_2x_2)^2}+p\omega_1x_1.
\end{align}

Based on (\ref{eq_11})-(\ref{eq_22}), we have
\begin{itemize}
  \item [1)] After some tedious mathematical manipulations, the determinants of the principle minors of $J$ at $x^*=0$ should satisfy the following conditions to guarantee the negative definiteness of $J$:
\begin{equation}
\label{condition21}
\det(J_{11})
\!=\!\frac{a-b}{N\omega_2}\!-\!p(\omega_1\!-\!\omega_2)<0,
\end{equation}
\begin{equation}
\label{condition22}
\det\left(J\right)\!=\!\left(\frac{a\!-\!b}{N\omega_2}\!-\!p(\omega_1\!-\!\omega_2)\right)(p\omega_2\!-\!\frac{b}{N\omega_2})>0.
\end{equation}

\item [2)] Similarly, at $x^*\!=\!\frac{a-b}{Np(\omega_1-\omega_2)^2}\!-\!\frac{\omega_2}{\omega_1-\omega_2}$, the following conditions can be obtained for the negative definiteness of $J$ after some mathematical manipulations:
\begin{align}
\label{condition23}
\det(J_{11})
&\!=\!\frac{c\left(a(\omega_1+\omega_2)\!+\!\omega_1(-2b\!+\!Np\omega_1(\omega_2\!-\!\omega_1))\right)}{N(a\!-\!b)(\omega_1\!-\!\omega_2)^2}\nonumber\\
&<0,
\end{align}
\begin{equation}
\label{condition24}
\det(J)\!=\!\frac{pc(-b\omega_1\!+\!a\omega_2)(a\!-\!b\!+\!Np\omega_1(\omega_2\!-\!\omega_1))}{N(a-b)^2(\omega_1-\omega_2)}>0.
\end{equation}
\end{itemize}
Then, the proof to Lemma 1 is completed.
\end{proof}

We note that the blockchain network is comprised by a large population of individual miners in the real-world scenarios. Then, from Lemma~\ref{lemma_ESS}, we can employ the asymptotic analysis and obtain the following theorem on evolutionary stability of the rest points.
\begin{theorem}
  \label{theorem_stable} Assume that the population size $N$ is sufficiently large. Then, neither of the rest points with $x^*\in\{0,1\}$ is evolutionary stable. The rest point with $x^*\!=\!\frac{a-b}{Np(\omega_1-\omega_2)^2}\!-\!\frac{\omega_2}{\omega_1-\omega_2}$ is an ESS if the following conditions are satisfied:
\begin{align}
\label{conditionESS}
  \left\{
  \begin{array}{ll}
    a-b<0,\\
    (b\omega_1-a\omega_2)(\omega_2-\omega_1)>0.
  \end{array}\right.
\end{align}
\end{theorem}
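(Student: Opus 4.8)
The plan is to obtain both parts of the theorem from Lemma~\ref{lemma_ESS} by letting the population size $N$ grow while the reward coefficients $a,b$, the energy price $p$, and the hash-rate requirements $\omega_1,\omega_2$ stay fixed. Lemma~\ref{lemma_ESS} has already reduced evolutionary stability of each candidate rest point to a pair of sign inequalities on the principal minors of the Jacobian $J$ in (\ref{eq_jacobian}) (negative definiteness of $J$ being the asymptotic-stability criterion used there to certify an ESS); all that remains is to determine which of those inequalities can survive once $N$ is large.

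For the monomorphic rest points, by the pool-swapping symmetry noted just before Lemma~\ref{lemma_ESS} it suffices to treat $x^*=0$. In (\ref{condition11})--(\ref{condition12}) the term $\frac{a-b}{N\omega_2}$ is $O(1/N)$, so $\det(J_{11})$ tends to $-p(\omega_1-\omega_2)$, whereas the second factor $p\omega_2-\frac{b}{N\omega_2}$ of (\ref{condition12}) tends to $p\omega_2>0$. Hence, if (\ref{condition11}) holds for large $N$ then $\det(J_{11})<0$ while $p\omega_2-\frac{b}{N\omega_2}>0$, so their product is negative, contradicting (\ref{condition12}); and if (\ref{condition11}) fails, $J$ is not negative definite for that reason. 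Either way, for all sufficiently large $N$ the Lemma~\ref{lemma_ESS} criterion cannot be met at $x^*=0$, so $x^*=0$ --- and by symmetry $x^*=1$ --- is not evolutionary stable.

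For the interior rest point I would substitute $x^*=\frac{a-b}{Np(\omega_1-\omega_2)^2}-\frac{\omega_2}{\omega_1-\omega_2}$ into (\ref{condition1})--(\ref{condition2}) and expand the numerators as polynomials in $N$, using the shorthand $c=a-b+Np\omega_2(\omega_2-\omega_1)$ from Lemma~\ref{lemma_ESS}. The numerator of (\ref{condition1}) is a product of two factors affine in $N$, with leading (degree-two) coefficient a positive multiple of $\omega_1^{2}\omega_2(\omega_2-\omega_1)^{2}$, hence eventually positive; dividing by $N(a-b)(\omega_1-\omega_2)^{2}$, the inequality $\det(J_{11})<0$ therefore collapses, for $N$ large, to $a-b<0$. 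Likewise the factor $c\bigl(a-b+Np\omega_1(\omega_2-\omega_1)\bigr)$ in the numerator of (\ref{condition2}) has leading coefficient a positive multiple of $\omega_1\omega_2(\omega_2-\omega_1)^{2}$, hence is eventually positive, so for $N$ large the numerator of (\ref{condition2}) has the sign of $a\omega_2-b\omega_1$; dividing by $N(a-b)^{2}(\omega_1-\omega_2)$ shows that $\det(J)>0$ collapses to $\frac{a\omega_2-b\omega_1}{\omega_1-\omega_2}>0$, which is the second line of (\ref{conditionESS}). Thus (\ref{conditionESS}) is exactly the large-$N$ form of the two conditions of Lemma~\ref{lemma_ESS}; conversely, when (\ref{conditionESS}) holds, (\ref{condition1})--(\ref{condition2}) are satisfied for all $N$ above some threshold, $J$ is negative definite there, and by Definition~\ref{def_ESS} the interior rest point is an ESS.

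The main obstacle is the algebraic bookkeeping in the last step: (\ref{condition1})--(\ref{condition2}) are ratios of polynomials quadratic in $N$, and one must isolate the leading coefficient and check it is nonzero --- which is precisely where the hypothesis $\omega_1\neq\omega_2$ implicit in the form of $x^*$ enters, since every leading coefficient carries the factor $(\omega_1-\omega_2)^{2}$. A further point to handle carefully is that the sign of $\omega_1-\omega_2$ is not fixed, so in reducing (\ref{condition2}) one should track the ratio $\frac{a\omega_2-b\omega_1}{\omega_1-\omega_2}$ to the end rather than multiply through by a factor of unknown sign; with that precaution both the incompatibility in the boundary case and the collapse in the interior case follow from a bare comparison of powers of $N$.
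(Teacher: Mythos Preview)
The proposal is correct and follows essentially the same approach as the paper: both parts are obtained from Lemma~\ref{lemma_ESS} by an asymptotic analysis of the sign conditions (\ref{condition11})--(\ref{condition12}) and (\ref{condition1})--(\ref{condition2}) as $N\to\infty$, isolating the dominant terms in $N$ to read off the surviving constraints. Your organization of the boundary case (splitting on whether (\ref{condition11}) holds rather than on the sign of $\omega_1-\omega_2$) and your explicit tracking of the leading coefficients $p^{2}\omega_1^{2}\omega_2(\omega_2-\omega_1)^{2}$ and $p^{2}\omega_1\omega_2(\omega_2-\omega_1)^{2}$ are minor presentational variants of the paper's limit computations in (\ref{proof_theorem_ess_1})--(\ref{proof_theorem_ess_2}).
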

\begin{proof}
First, at the rest point with $x^*=0$, by Lemma \ref{lemma_ESS}, we can obtain the following conditions for the Jacobian if $\omega_1\leq\omega_2$,
\begin{equation}
\lim_{N\rightarrow+\infty}\det(J_{11})\!=\!\lim_{N\rightarrow+\infty}\frac{a-b}{N\omega_2}-p(\omega_1-\omega_2)\geq0.
\end{equation}
Then, the Jacobian matrix is not negative definite. Alternatively, if $\omega_1>\omega_2$, we have
\begin{equation}
\lim_{N\rightarrow+\infty}\det(J_{11})\!=\!
\lim_{N\rightarrow+\infty}\frac{a-b}{N\omega_2}-p(\omega_1-\omega_2)<0,
\end{equation}
and
\begin{equation}
  \label{condition32}
\lim_{N\rightarrow+\infty}\det(J)\!=\!
\lim_{N\rightarrow+\infty}(\frac{a\!-\!b}{N\omega_2}\!-\!p(\omega_1\!-\!\omega_2))(p\omega_2\!-\!\frac{b}{N\omega_2})\!<\!0.
\end{equation}
Again, the Jacobian matrix cannot be negative definite. Then, the rest point with $x^*=0$ is not an ESS. Following the same procedure, we can show that the rest point with $x^*=1$ is not evolutionary stable, either.

By~\cite{Cressman2003}, we know that any rest point in the interior of $\mathcal{X}$ is an NE. Then, for the NE with $x^*\!=\!\frac{a-b}{Np(\omega_1-\omega_2)^2}\!-\!\frac{\omega_2}{\omega_1-\omega_2}$, following Lemma \ref{lemma_ESS}, we obtain
\begin{align}
  \label{proof_theorem_ess_1}
&\lim_{N\rightarrow+\infty}\det(J_{11})\nonumber\\
&\!=\!\lim_{N\rightarrow+\infty}\frac{(a-b+Np\omega_2(\omega_2-\omega_1))a(\omega_1+\omega_2)}{N(a-b)(\omega_1-\omega_2)^2}+
\nonumber\\
&\frac{(a-b+Np\omega_2(\omega_2-\omega_1))\omega_1(-2b+Np\omega_1(\omega_2-\omega_1))}{N(a-b)(\omega_1-\omega_2)^2}
\nonumber\\
&\!=\!\lim_{N\rightarrow+\infty}\frac{Np^2\omega_1\omega_2}{a-b},
\end{align}
and
\begin{align}
  \label{proof_theorem_ess_2}
&\lim_{N\rightarrow+\infty}\det\left(J\right)
\!=\!\lim_{N\rightarrow+\infty}\frac{p(a-b+Np\omega_2(\omega_2-\omega_1))}{N(a-b)^2(\omega_1-\omega_2)}\cdot
\nonumber\\
&\frac{(-b\omega_1+a\omega_2)(a-b+Np\omega_1(\omega_2-\omega_1))}{N(a-b)^2(\omega_1-\omega_2)}
\nonumber\\
&\!=\!\lim_{N\rightarrow+\infty}\frac{Np^3\omega_1\omega_2(b\omega_1-a\omega_2)(\omega_2-\omega_1)}{(a-b)^2}.
\end{align}
By (\ref{proof_theorem_ess_1}) and (\ref{proof_theorem_ess_2}), the Jacobian matrix is negative definite if the conditions given in (\ref{conditionESS}) are satisfied, hence the NE $(x^*,1-x^*)$ is an ESS. Then, the proof to Theorem 2 is completed.
\end{proof}

\section{Evolution Analysis}
\label{Sec:second}
In this section, we conduct several numerical simulations and provide the performance evaluation of the individual miners' pool-selection strategies in different situations. We first consider a blockchain network with $N=5000$ individual miners, which evolve to form two mining pools (i.e., $M=2$). For the purpose of demonstration, we set the block generation parameters as $\lambda=1/600$, $\frac{1}{\gamma c}+b=0.005$, $R=1000$, $\rho=2$ and $p=0.01$. We also set the initial population state as $\mathbf{x}=[0.75,0.25]$. We first consider that the two pools adopt their mining strategies with the same block size, $s_1=s_2=100$, and different computation power contribution, $\omega_1=30$ and $\omega_2=20$. By Theorem~\ref{theorem_stable}, we know that such strategy adaptation satisfies the condition for an ESS in the interior of the simplex $\mathcal{X}$. Figure~\ref{fig1_a} demonstrates the evolution of the miners' pool-selection strategies. According to Figure~\ref{fig1_b}, the strategies converge to a global ESS of $(0.4, 0.6)$, which is in accordance with our theoretical prediction. We also observe that relatively fewer miners choose to join the pool requiring a higher hash rate (i.e., pool 1) at the ESS.  This is because a higher computation power requirement will lead to an increase in the mining cost, which exceeds the profit improvement that the miner can obtain in that pool.
\begin{figure}[t]
\centering     
\subfigure[]{\label{fig1_a}\includegraphics[width=.245\textwidth]{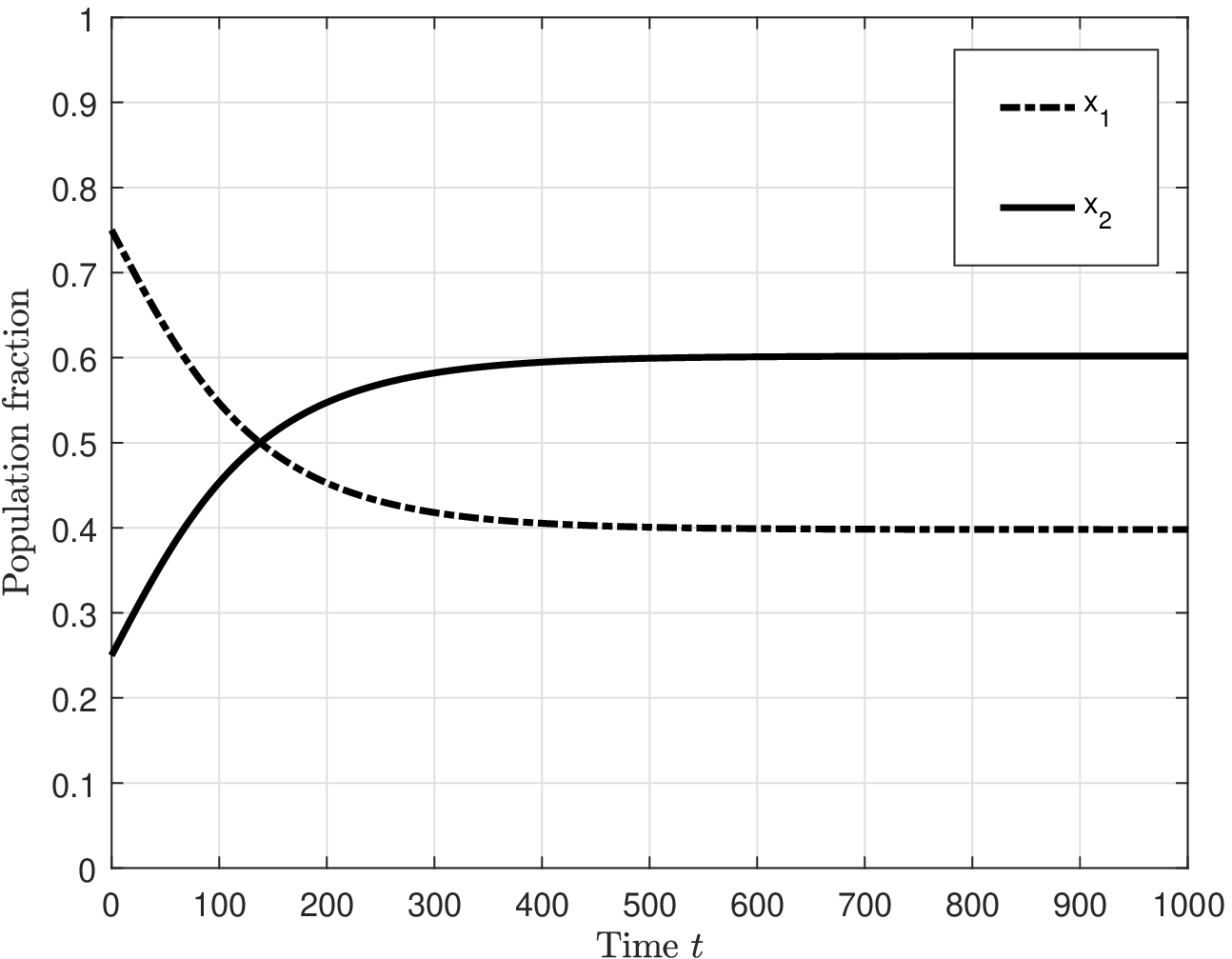}}
\subfigure[]{\label{fig1_b}\includegraphics[width=.237\textwidth]{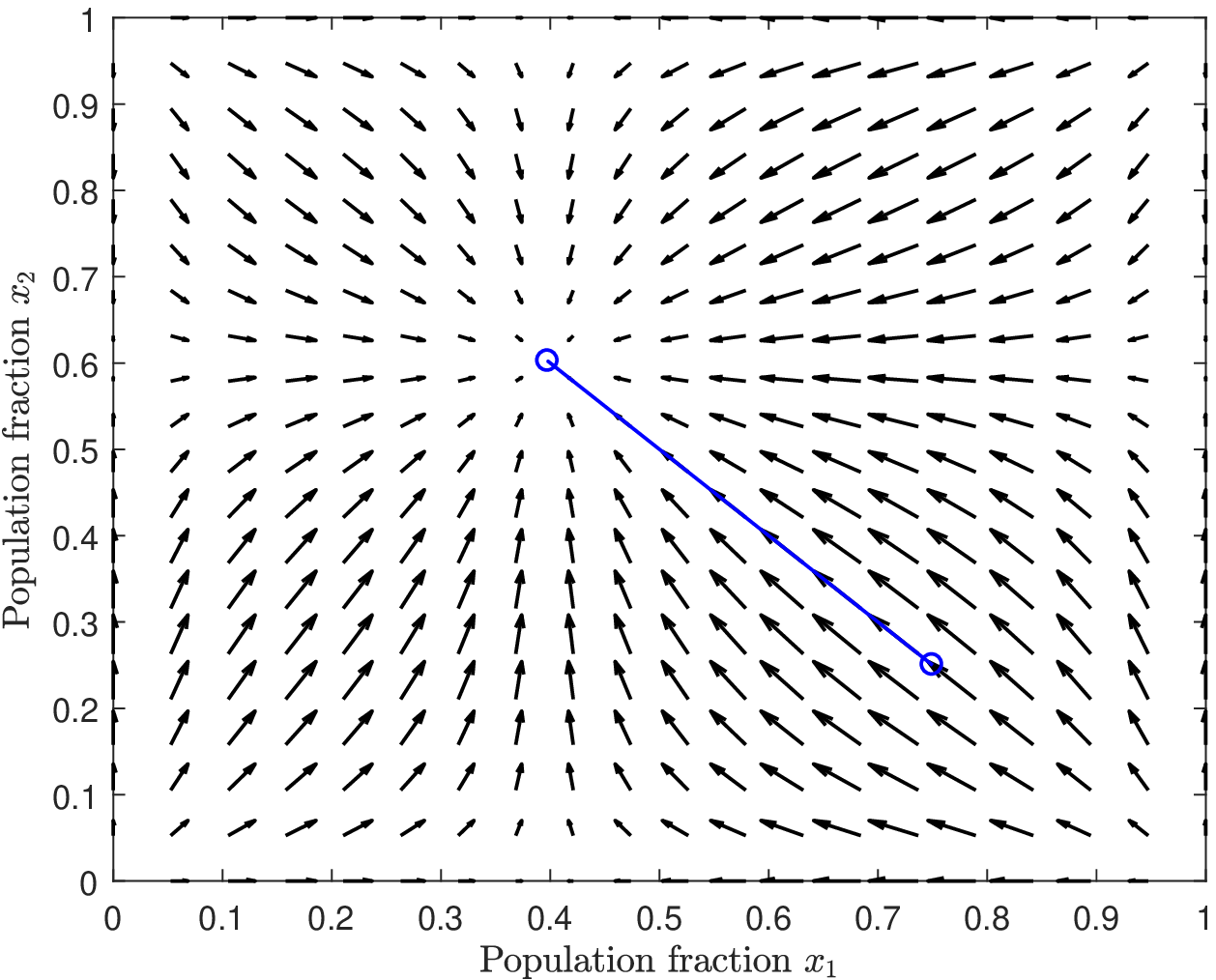}}
\caption{(a) Evolution of the miners' population states over time with two mining pools. (b) Replicator dynamics of the pool-selection strategies and the evolution trajectory starting from\vspace{-1mm} $\mathbf{x}(0)\!=\!(0.75, 0.25)$.}
\label{fig1}
\end{figure}

Further, we analyze the influence of the network condition on the pool-selection strategies of the individual miners. In Figure~\ref{fig3}, we show the evolution of the stable population states and the corresponding average payoff of the individual miners with respect to varied delay coefficient $\frac{1}{\gamma c}+b$. In the simulation, we adopt the same mining strategies as in Figure~\ref{fig1}. Figure~\ref{fig3_a} shows that as the propagation delay coefficient increases, more miners will tend to join the pool with a smaller hash rate requirement ($\omega_2\!=\!20$). Jointly considering the payoffs at NE shown in Figure~\ref{fig3_b}, we know that a larger delay coefficient leads to a higher probability of orphaning blocks of the same size. As a result, the miners prefer to join the pool that induces lower mining cost. We can also observe in Figure~\ref{fig3_b}  that the payoffs of the mining pool remain unchanged at zero. This phenomenon can be interpreted as a situation of market equilibrium where the demand for the hash rate exactly meets the supply with the current settings of reward parameters.
\begin{figure}[t]
\centering     
\subfigure[]{\label{fig3_a}\includegraphics[width=.2375\textwidth]{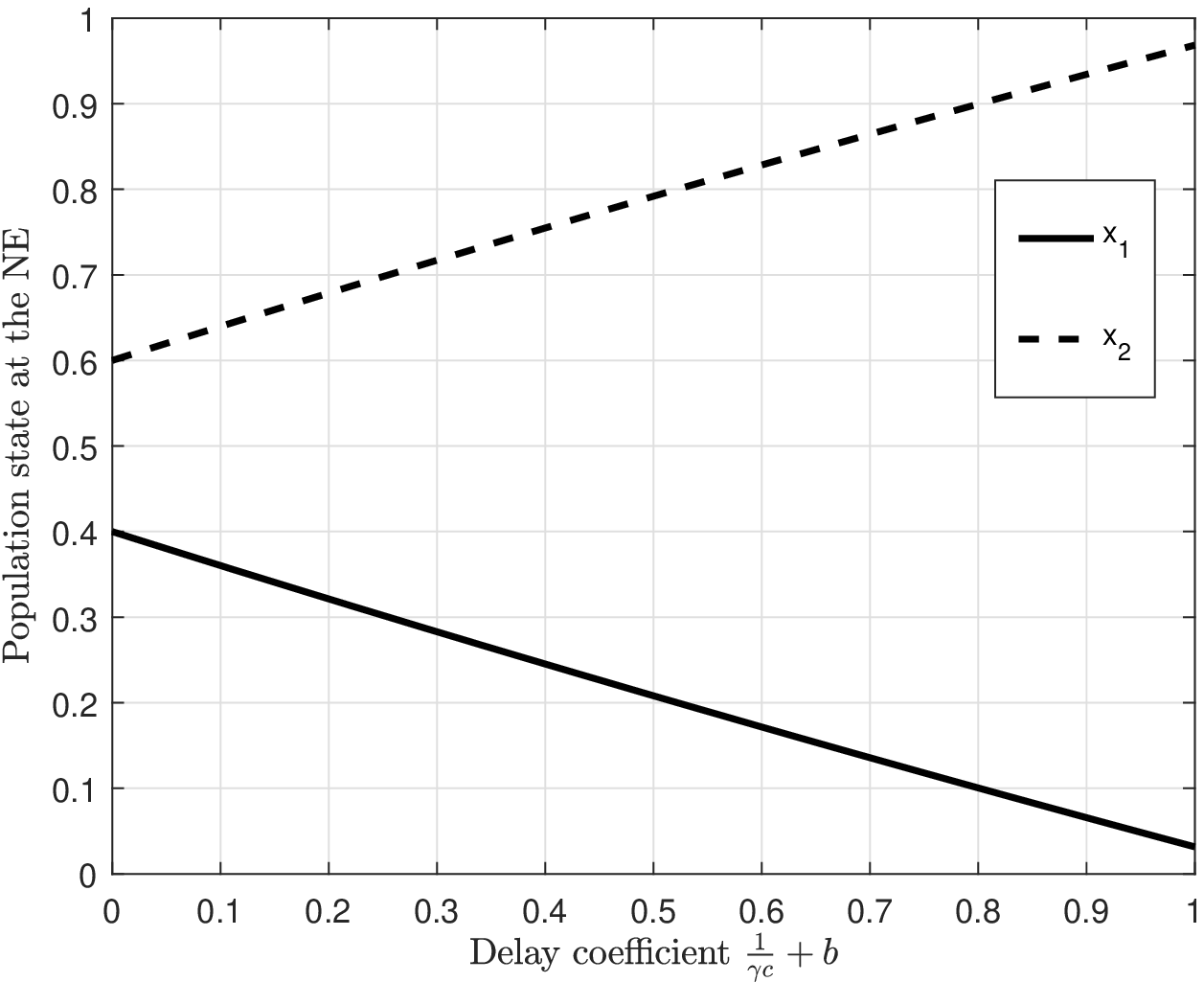}}
\subfigure[]{\label{fig3_b}\includegraphics[width=.244\textwidth]{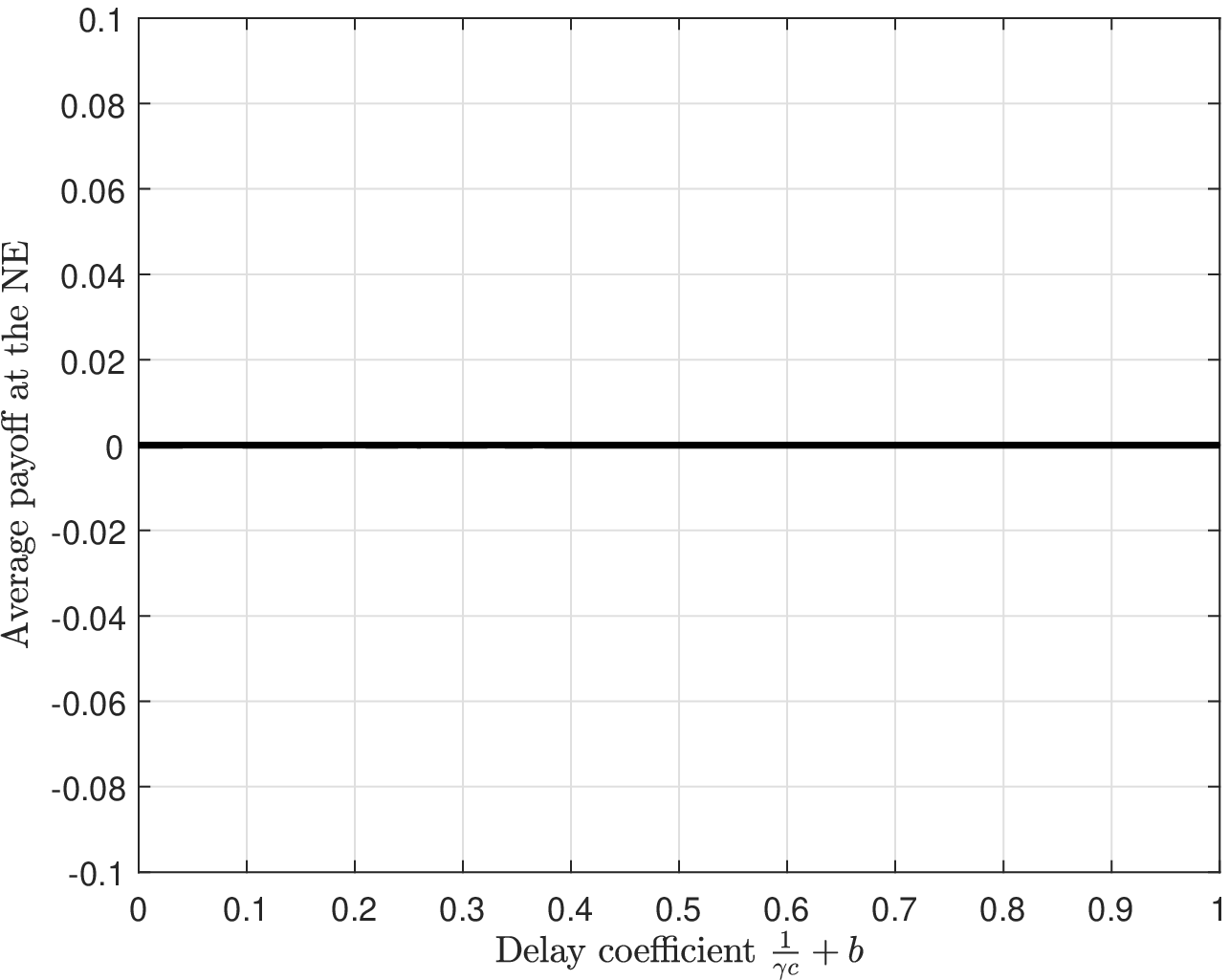}}
\caption{(a) Population state at the ESS vs. varying delay coefficient $\frac{1}{\gamma c}+b$. (b) Average payoff of the miners at the ESS vs. varying delay coefficient $\frac{1}{\gamma c}+b$.}
\label{fig3}
\end{figure}
\begin{figure}[t]
\centering     
\subfigure[]{\label{fig4_a}\includegraphics[width=.2375\textwidth]{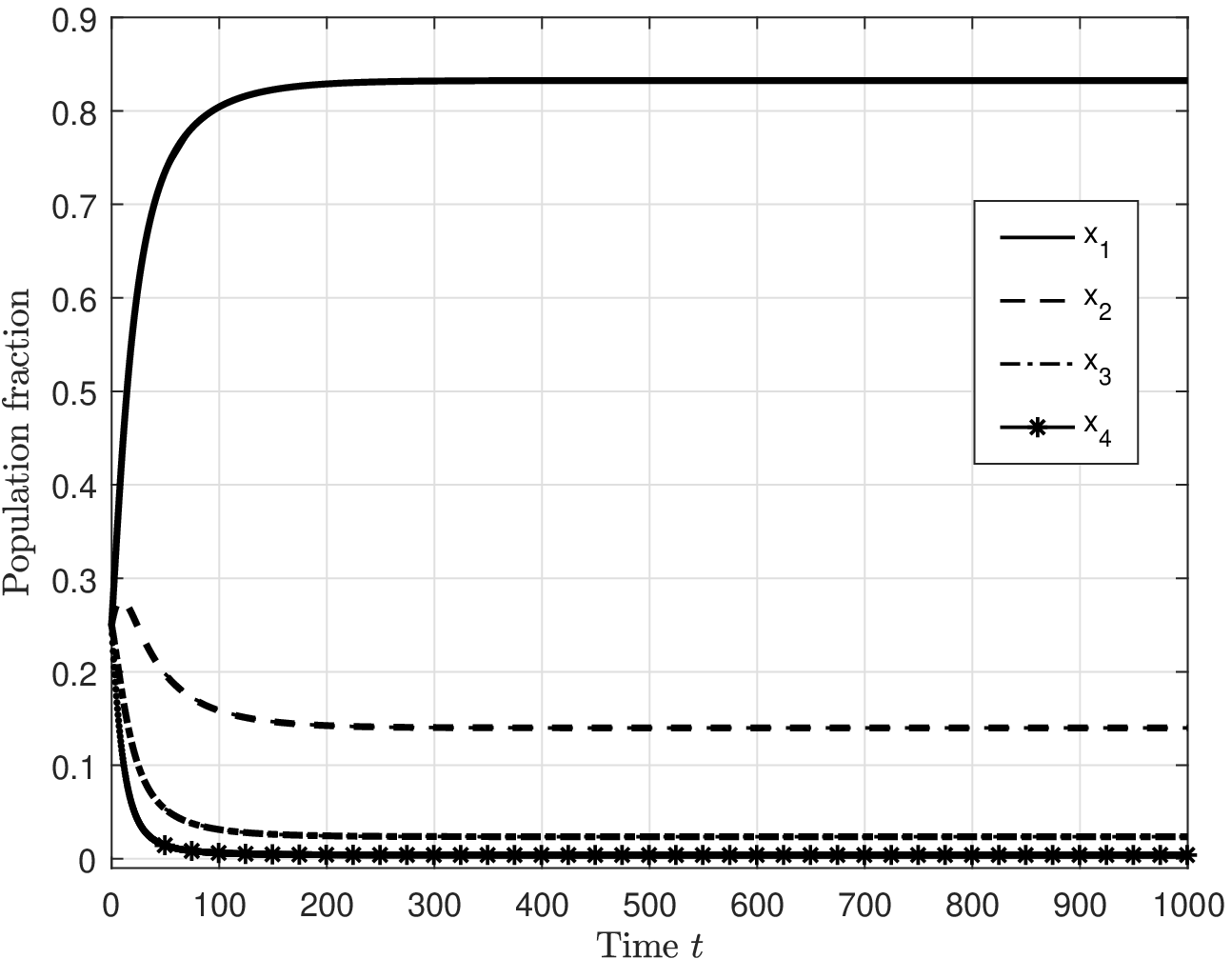}}
\subfigure[]{\label{fig4_b}\includegraphics[width=.243\textwidth]{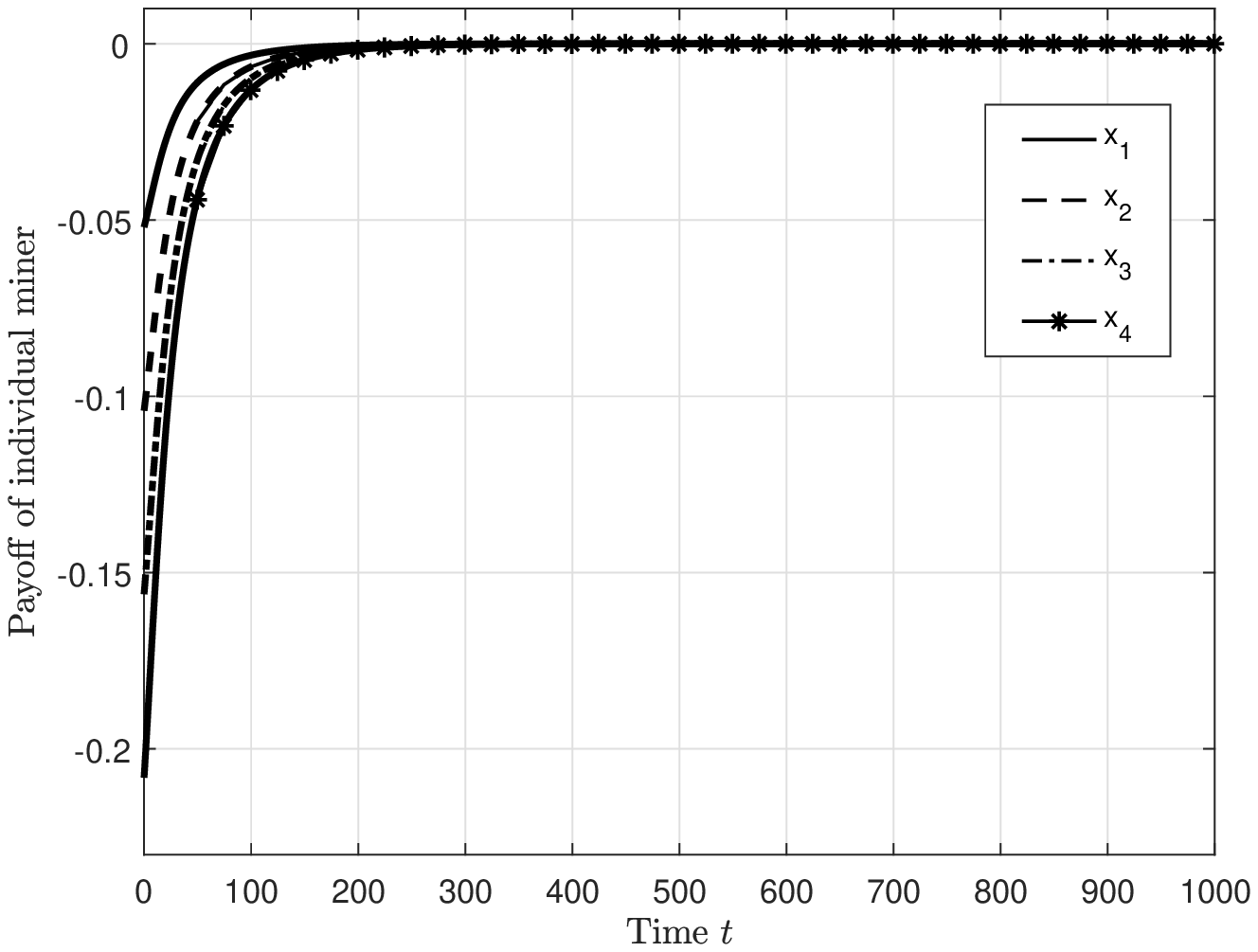}}
\caption{(a) The population states evolution with respect to different delay coefficient $\frac{1}{\gamma c}+b$, where mining strategy variables are $s_1=100$, $s_2=120$ and $\omega_1=20$, $\omega_2=20$. (b) Payoff evolution with respect to different delay coefficient $\frac{1}{\gamma c}+b$.}
\label{fig4}
\end{figure}

Finally, we consider a more general situation with four mining pools, where each pool adopts in their mining strategy the same block size as $s_i=100$ ($1\le i\le 4$) and different requirement on the hash rate contribution as $\omega_1\!=\!10$, $\omega_2\!=\!20$, $\omega_3\!=\!30$ and $\omega_4\!=\!40$. The evolution of the miner population states is presented in Figure~\ref{fig4_a}. In the considered case, we observe that when the miners' pool-selection strategies converge to the equilibrium, selecting pool 1 becomes a dominating strategy since by contributing a higher hash rate, the profit gain is unable to cover the power consumption cost for each miner. Then, the individual miners prefer to decrease their dedicated hash rate since they are sensitive to the mining cost. Figure~\ref{fig4_b} show that the payoffs of joining a pool evolves from negative value to zero. Again, this indicates a situation where the block mining business becomes a perfect competition market with an NE payoff of zero, and no miner can switch its pool selection without undermining some other miner's payoff at the equilibrium.

\section{Conclusion}\label{Sec:Conclusion}
In this paper, we have investigated the dynamic mining pool-selection problem in a blockchain network using Nakamoto consensus protocol. We model the dynamics of the individual miner's pool-selection strategies as an evolutionary game. In particular, we have considered the computation power and propagation delay as two major factors that determine the outcome of the block mining competition. Furthermore, we have theoretically analyzed the evolutionary stability of the pool selection dynamics based on a case study of two mining pools. For the case of two mining pools, we have shown that the blockchain network conditionally admits a unique evolutionary stable state. Our simulation results have provided the numerical evidence for our theoretical discoveries.

\bibliography{reference}
\bibliographystyle{IEEEtran}

\end{document}